\documentclass[preprint,12pt]{elsarticle}

\usepackage[utf8]{inputenc}
\usepackage[T1]{fontenc}
\usepackage{graphicx}
\usepackage{hyperref}
\pdfstringdefDisableCommands{%
  \def\corref#1{}
  \def\cortext#1{}
  \def\tnoteref#1{}%
  \def\tnotetext#1#2{}%
}
\usepackage{amsmath,amssymb,amsfonts,amsthm,mathtools}

\theoremstyle{plain}
\newtheorem{theorem}{Theorem}[section]

\newtheorem{lemma}[theorem]{Lemma}

\theoremstyle{definition}
\newtheorem{definition}[theorem]{Definition}
\newtheorem{assumption}[theorem]{Assumption}

\theoremstyle{remark}

\usepackage{algorithm}
\usepackage{algpseudocode}
\usepackage{caption,subcaption,wrapfig,array,booktabs,siunitx}

\makeatletter
\def\ps@pprintTitle{%
  \let\@oddhead\@empty
  \let\@evenhead\@empty
  \let\@oddfoot\@empty
  \let\@evenfoot\@empty}
\makeatother

\journal{Journal of Computational Science}

\begin{document}
\begin{frontmatter}

\title{Velocity-Inferred Hamiltonian Neural Networks:\\
       Learning Energy-Conserving Dynamics from Position-Only Data}

\author[addr1]{Ruichen Xu}

\author[addr3]{Zongyu Wu}

\author[addr2]{Luoyao Chen}

\author[addr1]{Georgios Kementzidis}

\author[addr4]{Siyao Wang}

\author[addr1]{Haochun Wang}

\author[addr1]{Yiwei Shi}

\author[addr1]{Yuefan Deng\corref{cor1}}

\cortext[cor1]{Corresponding author}

\affiliation[addr1]{%
  organization={Department of Applied Mathematics and Statistics, Stony Brook University}, country={USA}}

\affiliation[addr2]{%
  organization={New York University}, country={USA}}

\affiliation[addr3]{%
  organization={Independent Researcher}, country={USA}}

\affiliation[addr4]{%
  organization={Department of Statistics, University of California, Davis},country={USA}}

\begin{abstract}
Data-driven modeling of physical systems often relies on learning both positions and momenta to accurately capture Hamiltonian dynamics. However, in many practical scenarios, only position measurements are readily available. In this work, we introduce a method to train a standard Hamiltonian Neural Network~(HNN) using \emph{only} position data, enabled by a theoretical result that permits transforming the Hamiltonian $H(q,p)$ into a form $H(q,v)$. Under certain assumptions—namely, an invertible relationship between momentum and velocity—we formally prove the validity of this substitution and demonstrate how it allows us to infer momentum from position alone. We apply our approach to canonical examples including the spring–mass system, pendulum, two-body, and three-body problems. Our results show that using only position data is sufficient for stable and energy-consistent long-term predictions, suggesting a promising pathway for data-driven discovery of Hamiltonian systems when momentum measurements are unavailable.
\end{abstract}


\end{frontmatter}

\begin{center}
\footnotesize
\textbf{Note:} This manuscript is a \emph{preprint (ongoing work)}. 
We may update or revise it significantly before official publication.
\normalsize
\end{center}

\section{Introduction}
\label{sec:intro}

Modeling high-fidelity dynamics in physical systems is a longstanding challenge in both scientific computing and artificial intelligence.  
Classical methods often rely on specialized numerical schemes, such as symplectic integrators \cite{Hairer2006}, to preserve geometric properties of the underlying equations.  
However, these methods can become cumbersome when the system operates in high-dimensional spaces or when only partial observations (such as positions) are available \cite{Celledoni2021,Azencot2022}.  
In recent years, data-driven approaches have emerged that discover governing laws directly from observations while maintaining critical physical constraints, exemplified by neural ordinary differential equation frameworks \cite{Chen2018} and physics-informed neural networks \cite{Raissi2019}.

Among these advances, \emph{Hamiltonian Neural Networks (HNNs)} learn a Hamiltonian function whose partial derivatives define energy-conserving dynamics \cite{Greydanus2019}.  
Traditionally, such models assume that both positions and canonical momenta are accessible, allowing the Hamiltonian to be trained from direct measurements of how states evolve over time.  
In many practical scenarios, however, momentum measurements are unavailable, and only position data can be collected.  
This limitation arises in domains such as orbital tracking or biological motion analysis, where sensors record spatial trajectories but cannot measure the physical impulse driving those trajectories.

\paragraph{Canonical momentum vs.\ velocity}
The momentum variable, often denoted \(p\) in Hamiltonian systems, is not always a simple “mass times velocity.”  
In Cartesian coordinates with constant mass, this may hold (i.e., \(p = m\,v\)), but in more general coordinate systems (such as a pendulum’s angular coordinates) or systems with varying mass matrices, the relationship between \(p\) and velocity \(v\) can be more nuanced.  
Nevertheless, under broad conditions, an invertible mapping \(p = M(q)\,v\) (for some mass-like matrix \(M\)) allows one to recover the velocity from measured positions (via finite differencing) and treat it as a surrogate for momentum.

\paragraph{Introducing Velocity-Inferred Hamiltonian Neural Networks (VI-HNN)}
We propose a straightforward strategy to learn Hamiltonian dynamics \emph{without direct momentum observations}, relying on the assumption that momentum can be approximated by velocity through an invertible mapping.  
In this approach, position data alone are used to estimate velocities, which in turn serve as substitutes for canonical momenta in the Hamiltonian formulation.  
We refer to this architecture as the \emph{Velocity-Inferred Hamiltonian Neural Network (VI-HNN)}.  
By working exclusively with position measurements, our method relaxes data requirements and is well suited to real-world scenarios where measuring momentum or force directly is challenging or infeasible.

Concretely, we approximate velocity trajectories from sequential position data via finite differences or smoothing filters.  
These velocity estimates enter a neural network that enforces Hamilton’s canonical equations in a data-driven manner, ensuring the learned model conserves energy and maintains symplectic structure—much like a conventional HNN that has direct access to \((q,p)\).  
We demonstrate the efficacy of VI-HNN on benchmark physical systems, including spring–mass oscillators, pendula, and multi-body orbits, showing that position-only data suffice to recover robust long-horizon trajectories under an invertible momentum–velocity relationship.

\subsection*{Contributions}
\begin{itemize}
  \item \textbf{Position-only Hamiltonian learning:} We introduce a novel Velocity-Inferred HNN (VI-HNN) that replaces the canonical momentum variable with an estimated velocity, enabling purely position-based training.
  \item \textbf{Energy-conserving structure:} Under an invertible momentum–velocity relationship, VI-HNN preserves energy and respects the Hamiltonian framework, closely mirroring the behavior of standard HNNs that use both \((q,p)\).
  \item \textbf{Real-world applicability:} We provide numerical evidence on common physics benchmarks (e.g., spring–mass systems, pendula, multi-body orbits) showing that VI-HNN achieves stable, accurate predictions using only position data, making it suitable for practical settings where momentum is difficult or impossible to measure.
\end{itemize}

\section{Background}
\label{sec:RelatedWork}

A growing body of research integrates neural networks with classical dynamical principles to enhance the stability, interpretability, and accuracy of modeling physical systems. Physics-informed neural networks embed partial differential equation (PDE) constraints into deep models, yielding improved training efficiency and generalization \cite{Raissi2019,Toth2023, gao2024coordinate}, while operator-learning methods tackle high-dimensional PDEs and multi-agent settings \cite{Freedman2022, gao2023active}. Symbolic regression aims to derive explicit, human-readable equations \cite{Xue2022}. Within this broader landscape, Hamiltonian Neural Networks (HNNs) stand out for their focus on energy preservation and phase-space structure \cite{Greydanus2019,Chen2018}. Numerous extensions have been proposed to increase HNN flexibility and robustness, such as integrating control inputs \cite{Zhong2020Symplectic}, incorporating Lagrangian priors for robotic applications \cite{Lutter2019DeepLagrangian}, leveraging symmetry-preserving architectures \cite{Jin2020SympNet}, handling dissipative systems \cite{Zhong2019Dissipative}, or scaling up to high-dimensional problems \cite{ChenWang2021b,Finzi2020Meta,Xu2022PhysHNN,WeinanE2021,Pan2021PhysicsSymplectic}. Other works have explored continuous and discrete time symplectic neural networks \cite{Azencot2022}, neural PDE-based Hamiltonian formulations \cite{Chang2022}, and robust modeling strategies for chaotic mechanical systems \cite{Song2022,Wang2023,Lee2023}. Although these methods collectively expand the applicability of data-driven Hamiltonian modeling, most rely on direct access to the canonical momentum \(p\), thereby limiting their usefulness when only position measurements are available.

 Building upon these advancements, our Velocity-Inferred Hamiltonian Neural Network (VI-HNN) focuses on the practical scenario where momentum is unrecorded. By exploiting the invertible map between momentum and velocity, we approximate \(p\) from position data alone, learning a Hamiltonian of the form \(\widetilde{H}(q,v(q))\) that preserves the essential symplectic structure. This approach bridges the gap between theory and real-world data constraints, enabling energy-consistent predictions in contexts where standard HNNs—and their many extensions—would otherwise be impractical. The resulting framework maintains long-term stability, handles multi-scale dynamics, and provides a scalable alternative to conventional HNNs for position-only datasets.

\section{Preliminaries}
\label{sec:pre}

This section outlines the core concepts for our position-only Hamiltonian learning framework. 
In Section~\ref{sec:HamiltonianDynamics}, we review classical Hamiltonian dynamics and clarify the roles of generalized position and momentum. 
Section~\ref{sec:SymplecticTransformations} introduces symplectic transformations. 
Section~\ref{sec:StandardHNN} formalizes the standard Hamiltonian Neural Network (HNN), which takes both \((q,p)\) as inputs.

\subsection{Hamiltonian Dynamics}
\label{sec:HamiltonianDynamics}

\begin{definition}[Hamiltonian System]
\label{def:hamiltonian_system}
A \emph{Hamiltonian system} is defined on a phase space of dimension \(2d\), typically denoted by generalized coordinates 
\(
   q \in \mathbb{R}^d
\)
and generalized momenta 
\(
   p \in \mathbb{R}^d
\).
The system is governed by a scalar function 
\(
   H : \mathbb{R}^d \times \mathbb{R}^d \to \mathbb{R},
\)
called the \emph{Hamiltonian}.
\end{definition}

In many physical settings, \(H(q,p)\) describes the total energy, combining kinetic and potential terms.  
Under ideal conditions (no external forcing or dissipation), Hamiltonian flows preserve \(H(q,p)\) exactly, ensuring constant energy along trajectories \cite{Marsden1999}.

\subsection{Symplectic Transformations}
\label{sec:SymplecticTransformations}

\begin{definition}[Symplectic Transformation]
\label{def:symplectic_transformation}
Let \(\omega\) be a nondegenerate, closed two-form on a manifold \(\mathcal{M} \subseteq \mathbb{R}^{2d}\). 
A map 
\(
   \Phi : \mathcal{M} \to \mathcal{M}
\)
is called \emph{symplectic} if 
\(\Phi^* \omega = \omega\),
meaning it preserves \(\omega\) under pullback.
\end{definition}

Such transformations are integral to Hamiltonian mechanics, ensuring that phase-space volumes and energy invariants remain consistent over time \cite{SanzSerna1994,Hairer2006}. 
Symplectic integrators or symplectic-inspired neural models often exhibit smaller energy drift and more stable long-horizon predictions.

\subsection{Standard Hamiltonian Neural Networks}
\label{sec:StandardHNN}

\begin{definition}[Hamiltonian Neural Network (HNN)]
\label{def:HNN}
A \emph{Hamiltonian Neural Network} is a learnable model 
\(
   H_\theta : \mathbb{R}^d \times \mathbb{R}^d \to \mathbb{R}
\)
that approximates a true Hamiltonian \(H(q,p)\). 
Its inputs are the generalized position \(q\) and generalized momentum \(p\), and its outputs define the system's time evolution via Hamilton’s canonical equations.
\end{definition}

In practice, an HNN imposes 
\(
   \dot{q} = \frac{\partial H_\theta}{\partial p}
\), 
and
\(
   \dot{p} = - \frac{\partial H_\theta}{\partial q}
\),
thereby conserving an energy-like quantity over trajectories \cite{Greydanus2019}.  
Because the network takes both \(q\) and \(p\) as inputs, training typically requires direct measurements of momenta or velocities for supervision \cite{ChenWang2021}.  
However, in many real-world applications, momentum data may be difficult or expensive to measure, limiting the applicability of standard HNNs.

\section{Method}
\label{sec:method}

We present an \emph{Inverted-Velocity Hamiltonian Neural Network (IV-HNN)} that learns Hamiltonian dynamics from position-only observations. Section~\ref{subsec:invertible_map} establishes the mathematical condition under which momentum $p$ can be replaced by velocity $\dot{q}$ via an invertible map. Section~\ref{subsec:finite_diff_error} analyzes how finite-difference velocity approximations from position data affect network training, especially when the Hamiltonian is realized by feedforward layers with activation functions. Section~\ref{subsubsec:arch_loss} introduces a neural Hamiltonian \(\widetilde{\mathcal{H}}_\theta(q,\dot{q})\) trained via a loss that enforces Hamilton’s equations solely from position–velocity data, thereby guaranteeing a formally Hamiltonian flow without requiring explicit mass matrices.

\subsection{A General Invertibility Assumption for \texorpdfstring{$p \leftrightarrow \dot{q}$}{p <-> dq/dt}}
\label{subsec:invertible_map}

We begin with the Lagrangian formulation of mechanics, where \(q \in \mathbb{R}^d\) denotes the generalized position and \(\dot{q}\in\mathbb{R}^d\) (i.e.\ \(\partial q/\partial t\)) is the velocity.  The Lagrangian is typically expressed as
\[
  \mathcal{L}(q,\dot{q})
  \;=\;
  T(q,\dot{q}) \;-\; V(q),
\]
where \(T\) denotes kinetic energy and \(V\) denotes potential energy.  From the Euler--Lagrange equations,
\[
  \frac{d}{dt}\Bigl(\tfrac{\partial \mathcal{L}}{\partial \dot{q}}(q,\dot{q})\Bigr)
  \;=\;
  \frac{\partial \mathcal{L}}{\partial q},
\]
one obtains the evolution of \(q\).  The canonical momentum is introduced via
\[
  p
  \;=\;
  \frac{\partial \mathcal{L}}{\partial \dot{q}}(q,\dot{q}).
\]
A Legendre transform then defines
\[
  \mathcal{H}(q,p)
  \;=\;
  \sum_i \; p_i\,\dot{q}_i 
  \;-\;
  \mathcal{L}(q,\dot{q}),
\]
leading to Hamilton’s canonical equations 
\(\dot{q}=\tfrac{\partial\mathcal{H}}{\partial p},\quad \dot{p}=-\tfrac{\partial\mathcal{H}}{\partial q}\).

Replacing \(p\) by \(\dot{q}\) in a Hamiltonian formalism demands that
\[
  p \;=\; \tfrac{\partial \mathcal{L}}{\partial \dot{q}}\bigl(q,\dot{q}\bigr)
\]
be invertible with respect to \(\dot{q}\).  In most physical systems with a positive-definite mass or inertia matrix, this momentum--velocity mapping is indeed a diffeomorphism except at pathological points (e.g.\ singular mass matrices).  Consequently, \(\,p\leftrightarrow\dot{q}\) is a general principle for many mechanical scenarios.

\begin{assumption}[Invertible Momentum--Velocity Relation]
\label{assume:invertible}
Assume 
\[
  p \;=\; \tfrac{\partial \mathcal{L}}{\partial \dot{q}}\!\bigl(q,\dot{q}\bigr)
\]
is invertible w.r.t.\ \(\dot{q}\) for all \(\bigl(q,\dot{q}\bigr)\) in the domain of interest, i.e.\ the Hessian \(\tfrac{\partial^2\mathcal{L}}{\partial \dot{q}^2}\) is nonsingular. Under this assumption, one may solve uniquely for \(\dot{q}\) given \((q,p)\).
\end{assumption}

Such a condition holds whenever the kinetic energy is strictly positive-definite in \(\dot{q}\).  Even more complicated or constrained mechanical systems frequently satisfy local invertibility absent degeneracies.  By adopting Assumption~\ref{assume:invertible}, we can systematically rewrite
\(\mathcal{H}(q,p)\mapsto \widetilde{\mathcal{H}}(q,\dot{q})\)
without sacrificing the underlying dynamics.

\begin{theorem}[Invertible Map from Lagrangian to Hamiltonian Coordinates]
\label{thm:invertible_map}
Under Assumption~\ref{assume:invertible}, the map
\[
  (q,\dot{q})
  \;\mapsto\;
  \Bigl(q,\;p=\tfrac{\partial \mathcal{L}}{\partial \dot{q}}\bigl(q,\dot{q}\bigr)\Bigr)
\]
is a diffeomorphism on the relevant manifold, guaranteeing a one-to-one correspondence between \(\dot{q}\) and \(p\).  Consequently, substituting \(\,\dot{q}\) for \(p\) in the Hamiltonian \(\mathcal{H}\) does not break the canonical structure.
\end{theorem}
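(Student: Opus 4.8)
The plan is to exhibit the map explicitly as $\Phi(q,\dot q) = \bigl(q,\; \partial\mathcal{L}/\partial\dot q\,(q,\dot q)\bigr)$ and to study its derivative. First I would compute the Jacobian of $\Phi$ in block form relative to the splitting $(q,\dot q)$, obtaining
\[
  D\Phi(q,\dot q) \;=\;
  \begin{pmatrix}
    I_d & 0 \\[2pt]
    \dfrac{\partial^2 \mathcal{L}}{\partial q\,\partial\dot q} & \dfrac{\partial^2 \mathcal{L}}{\partial\dot q^2}
  \end{pmatrix},
\]
so that $\det D\Phi = \det\bigl(\partial^2\mathcal{L}/\partial\dot q^2\bigr)$ by block-triangularity. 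Assumption~\ref{assume:invertible} says precisely that this determinant is nonzero throughout the domain of interest, so the inverse function theorem gives at once that $\Phi$ is a local $C^1$-diffeomorphism (and $C^{k-1}$ when $\mathcal{L}\in C^k$), with the local inverse again fixing the $q$-coordinate and thus of the form $(q,p)\mapsto\bigl(q,\dot q(q,p)\bigr)$.

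Second, I would upgrade this to a genuine diffeomorphism on the relevant manifold. The useful structural fact is that $\Phi$ restricts, over each fixed $q$, to the gradient map $\dot q\mapsto \nabla_{\dot q}\mathcal{L}(q,\dot q)$. When the kinetic energy is strictly positive-definite in $\dot q$ — the situation singled out in the remark after Assumption~\ref{assume:invertible} — the function $\mathcal{L}(q,\cdot)$ is strictly convex, hence its gradient map is injective; combining injectivity on each fiber with the open-map and $C^1$-inverse properties already established, and (if the full image is claimed) a coercivity/properness hypothesis to obtain fiberwise surjectivity, one concludes each fiber map is a diffeomorphism, and therefore so is $\Phi$. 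I expect this local-to-global passage to be the main obstacle: Assumption~\ref{assume:invertible} as literally stated (nonsingular Hessian) delivers only local invertibility, and a clean global statement genuinely needs the strict-convexity strengthening or an explicit restriction to a simply connected domain on which $\Phi$ is proper (a Hadamard-type global inverse function theorem). I would therefore state the theorem's conclusion with the convexity hypothesis made explicit wherever global invertibility is used, and otherwise phrase it as a local diffeomorphism.

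Finally, for the assertion that replacing $p$ by $\dot q$ does not break the canonical structure, I would argue that $\Phi$ transports the whole Hamiltonian system. Setting $\widetilde{\mathcal{H}}(q,\dot q) := \mathcal{H}\bigl(\Phi(q,\dot q)\bigr) = \mathcal{H}\bigl(q,\partial\mathcal{L}/\partial\dot q\bigr)$, the Legendre identity displayed above gives $\widetilde{\mathcal{H}}(q,\dot q) = \dot q\cdot\partial\mathcal{L}/\partial\dot q(q,\dot q) - \mathcal{L}(q,\dot q)$, i.e.\ the Lagrangian energy. The two-form pulled back from phase space, $\omega_{\mathcal{L}} := \Phi^{*}(dq\wedge dp) = dq\wedge d\bigl(\partial\mathcal{L}/\partial\dot q\bigr)$, is closed and nondegenerate — nondegeneracy being exactly the nonsingularity of $\partial^2\mathcal{L}/\partial\dot q^2$ — so $\bigl(\,\cdot\,,\omega_{\mathcal{L}},\widetilde{\mathcal{H}}\bigr)$ is itself a Hamiltonian system, and because $\Phi$ is a diffeomorphism intertwining $(\omega,\mathcal{H})$ with $(\omega_{\mathcal{L}},\widetilde{\mathcal{H}})$, its integral curves are the $\Phi$-preimages of those of the original flow; equivalently, the Euler--Lagrange equations for $\mathcal{L}$ and Hamilton's equations for $\mathcal{H}$ trace out the same curves, which is the classical equivalence of the Lagrangian and Hamiltonian formulations. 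I would note, as an honest caveat, that the symplectic form carried along is $dq\wedge d(\partial\mathcal{L}/\partial\dot q)$ rather than the naive $dq\wedge d\dot q$; this step is then just bookkeeping once the diffeomorphism is in hand, so I would keep it short and lean on the standard references already cited.
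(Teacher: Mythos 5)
Your proposal is correct and, for the core claim, follows the same skeleton as the paper: both define the fibered map $(q,\dot q)\mapsto\bigl(q,\partial\mathcal{L}/\partial\dot q\bigr)$ and invoke the Inverse Function Theorem, with nonsingularity of $\partial^2\mathcal{L}/\partial\dot q^2$ as the hypothesis (your block-triangular Jacobian computation makes explicit what the paper leaves implicit). Where you genuinely diverge is in the two places the paper is weakest. First, the local-to-global step: the paper asserts that ``under continuity and appropriate topological conditions (e.g.\ a simply connected domain)'' the local diffeomorphisms patch into a global one, which is not by itself a valid argument --- a local diffeomorphism on a simply connected domain need not be injective. Your route through strict convexity of $\mathcal{L}(q,\cdot)$ (injectivity of the fiberwise gradient map, plus properness/coercivity if surjectivity is wanted, i.e.\ a Hadamard-type theorem) is the correct repair, and your suggestion to state the convexity hypothesis explicitly or else claim only a local diffeomorphism is exactly the right caveat. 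Second, the ``does not break the canonical structure'' clause: the paper does not prove this inside the theorem at all but defers it to Lemma~\ref{lem:velocity_replace}, which verifies Hamilton's equations by differentiating the Legendre transform coordinate-by-coordinate; you instead pull back the symplectic form, $\Phi^{*}(dq\wedge dp)=dq\wedge d\bigl(\partial\mathcal{L}/\partial\dot q\bigr)$, and observe that $\Phi$ intertwines the two Hamiltonian systems. The two arguments reach the same conclusion; the paper's computation is more elementary and yields the explicit identities $\dot q_i=\partial\mathcal{H}/\partial p_i$, $\dot p_i=-\partial\mathcal{H}/\partial q_i$, while yours is shorter, coordinate-free, and makes visible the honest caveat that the transported form is $dq\wedge d(\partial\mathcal{L}/\partial\dot q)$ rather than the naive $dq\wedge d\dot q$ --- a point the paper glosses over.
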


\begin{proof}
Let
\[
   F\bigl(q,\dot{q}\bigr)
   \;=\;
   \Bigl(q,\;\tfrac{\partial\mathcal{L}}{\partial \dot{q}}(q,\dot{q})\Bigr).
\]
Nonsingularity of the Hessian \(\,\tfrac{\partial^2\mathcal{L}}{\partial \dot{q}^2}\) implies, via the Inverse Function Theorem, that for each fixed $q$, one inverts $p$ to solve for $\dot{q}$.  Under continuity and appropriate topological conditions (e.g.\ a simply connected domain), these local diffeomorphisms extend globally, letting $p\!\leftrightarrow\!\dot{q}$.  
\end{proof}

\noindent
\textbf{Local to Global.}  
A local application of the Inverse Function Theorem suffices to ensure invertibility in some neighborhood.  If the mass/inertia matrix never degenerates, one typically obtains a global diffeomorphism on a simply connected region.  

\noindent
\textbf{Equivalence of Equations of Motion.}  
Once $\,p = \tfrac{\partial\mathcal{L}}{\partial \dot{q}}(q,\dot{q})$ is invertible, we define $\,\dot{q}=v(q,p)$. Substituting into $\,\mathcal{H}(q,p)$ yields $\,\widetilde{\mathcal{H}}(q,\dot{q})$. One can show (Lemma~\ref{lem:velocity_replace}, below) that 
\(\dot{q}=\tfrac{\partial\mathcal{H}}{\partial p}\) and 
\(\dot{p}=-\tfrac{\partial\mathcal{H}}{\partial q}\)
then become 
\(\dot{q}=\dot{q},\;\ddot{q}=-\,G(q,\dot{q})\),
producing the same $q(t)$-trajectory.  

\noindent
\textbf{Learning from Position Data.}  
Because $\dot{q}$ can serve as the canonical momentum variable, one may record only $\{q(t)\}$, numerically differentiate to get $\{\dot{q}(t)\}$, and define a velocity-based Hamiltonian $\widetilde{\mathcal{H}}_\theta\!(q,\dot{q})$. Under the invertibility assumption, the continuous-time dynamics remain Hamiltonian (see \S\ref{subsubsec:arch_loss}), and if desired, one recovers $p(t)=\tfrac{\partial\mathcal{L}}{\partial\dot{q}}\bigl(q(t),\dot{q}(t)\bigr)$.

\begin{lemma}[Replacing \(\dot{q}\) for \(p\) Yields Identical Trajectories]
\label{lem:velocity_replace}
Consider a Hamiltonian system defined by
\begin{equation}
  \dot{q} \;=\; \frac{\partial \mathcal{H}}{\partial p},
  \quad
  \dot{p} \;=\; -\,\frac{\partial \mathcal{H}}{\partial q},
  \label{eq:canonical-eqns}
\end{equation}
where \(\bigl(q(t),p(t)\bigr)\) evolve in continuous time. 
Assume there is a Lagrangian \(\mathcal{L}(q,\dot{q})\) with invertible momentum--velocity relation
\[
  p \;=\; \frac{\partial \mathcal{L}}{\partial \dot{q}}(q,\dot{q}).
\]
Then replacing \(p\) by \(\dot{q}\) does \emph{not} change the underlying trajectory \(q(t)\); that is, the second-order ODE satisfied by \(q(t)\) in \((q,p)\)-space is identical to the ODE obtained when writing everything in \((q,\dot{q})\)-space.

\end{lemma}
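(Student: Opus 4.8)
The plan is to reduce the statement to the classical equivalence between the Hamiltonian and Lagrangian equations of motion under a Legendre transform, keeping the change of variables fully explicit so that nothing is swept under ``it is well known.'' Write $v(q,p)$ for the unique solution of $p=\tfrac{\partial\mathcal{L}}{\partial\dot q}(q,\dot q)$ guaranteed by Assumption~\ref{assume:invertible} and Theorem~\ref{thm:invertible_map}, so that $\mathcal{H}(q,p)=p\cdot v(q,p)-\mathcal{L}\bigl(q,v(q,p)\bigr)$. First I would differentiate this identity: by the chain rule,
\[
  \frac{\partial\mathcal{H}}{\partial p_j}
  = v_j + \sum_i\Bigl(p_i - \tfrac{\partial\mathcal{L}}{\partial\dot q_i}(q,v)\Bigr)\frac{\partial v_i}{\partial p_j},
  \qquad
  \frac{\partial\mathcal{H}}{\partial q_j}
  = -\,\frac{\partial\mathcal{L}}{\partial q_j}(q,v)
  + \sum_i\Bigl(p_i - \tfrac{\partial\mathcal{L}}{\partial\dot q_i}(q,v)\Bigr)\frac{\partial v_i}{\partial q_j},
\]
and observe that the parenthesized factor $p_i-\tfrac{\partial\mathcal{L}}{\partial\dot q_i}(q,v)$ vanishes identically by the definition of $v$. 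Hence $\tfrac{\partial\mathcal{H}}{\partial p}=v(q,p)$ and $\tfrac{\partial\mathcal{H}}{\partial q}=-\tfrac{\partial\mathcal{L}}{\partial q}\bigl(q,v(q,p)\bigr)$ --- the two standard Legendre identities.

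Next I would feed these into Hamilton's equations~\eqref{eq:canonical-eqns}. The first equation becomes $\dot q=v(q,p)$, which is nothing but the momentum--velocity relation read in reverse; the second becomes $\dot p=\tfrac{\partial\mathcal{L}}{\partial q}(q,\dot q)$. Differentiating the momentum--velocity relation along the trajectory gives $\dot p=\tfrac{d}{dt}\bigl(\tfrac{\partial\mathcal{L}}{\partial\dot q}(q,\dot q)\bigr)$, and equating the two expressions for $\dot p$ produces the Euler--Lagrange equation
\[
  \frac{d}{dt}\Bigl(\frac{\partial\mathcal{L}}{\partial\dot q}(q,\dot q)\Bigr) = \frac{\partial\mathcal{L}}{\partial q}(q,\dot q).
\]
Expanding the total derivative and inverting the Hessian $\tfrac{\partial^2\mathcal{L}}{\partial\dot q^2}$ (nonsingular by Assumption~\ref{assume:invertible}) then yields an explicit second-order ODE $\ddot q=-\,G(q,\dot q)$ with $G(q,\dot q)=\bigl(\tfrac{\partial^2\mathcal{L}}{\partial\dot q^2}\bigr)^{-1}\bigl(\tfrac{\partial^2\mathcal{L}}{\partial q\,\partial\dot q}\,\dot q-\tfrac{\partial\mathcal{L}}{\partial q}\bigr)$; this is exactly the equation that governs $q(t)$ in $(q,\dot q)$-space. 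Finally I would close the loop with uniqueness: since the diffeomorphism of Theorem~\ref{thm:invertible_map} carries $(q(0),\dot q(0))$ to $(q(0),p(0))$ and back, a solution of~\eqref{eq:canonical-eqns} and the solution of $\ddot q=-G(q,\dot q)$ with matching initial data satisfy the same second-order ODE from the same initial state, so assuming $\mathcal{L}\in C^{2}$ (whence $G$ is locally Lipschitz) the Picard--Lindel\"of theorem forces the two $q(t)$ to coincide on their common interval of existence.

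The hard part is not any single estimate but the bookkeeping: I must keep a sharp distinction between $\tfrac{\partial\mathcal{L}}{\partial q}$ (the partial derivative with $\dot q$ frozen) and the total time derivative along a trajectory, and track carefully whether each quantity is evaluated at $(q,\dot q)$ or at $(q,v(q,p))$, so that the cancellation in the first step is manifestly exact rather than merely plausible. It is also worth stating up front that what is being proved is an equivalence of \emph{equations of motion} holding along solutions, and that the $C^{2}$ regularity hypothesis is precisely what upgrades ``same ODE, same initial condition'' to ``same trajectory''; without some such regularity the lemma as literally phrased would need additional care.
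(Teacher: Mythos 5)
Your proposal is correct and follows essentially the same route as the paper: differentiate the Legendre transform $\mathcal{H}(q,p)=p\cdot\dot q(q,p)-\mathcal{L}(q,\dot q(q,p))$ to obtain $\tfrac{\partial\mathcal{H}}{\partial p}=\dot q$ and $\tfrac{\partial\mathcal{H}}{\partial q}=-\tfrac{\partial\mathcal{L}}{\partial q}$, then recover the Euler--Lagrange equation and hence the same second-order ODE for $q(t)$. In fact your version is tighter at the two points where the paper is loose --- you correctly hold $p$ fixed when computing $\tfrac{\partial\mathcal{H}}{\partial q_i}$ (the paper's proof inserts a spurious $\tfrac{\partial p_j}{\partial q_i}\dot q_j$ term and ends with ``one matches terms''), and you close the argument with the explicit $\ddot q=-G(q,\dot q)$ form plus a Picard--Lindel\"of uniqueness step under $C^2$ regularity, which the paper omits.
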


\begin{proof}
Let \(\mathcal{L}(q,\dot{q})\) be the system’s Lagrangian, and define momentum by 
\[
   p_i \;=\; \frac{\partial \mathcal{L}}{\partial \dot{q}_i}(q,\dot{q}).
\]
We assume that for each fixed \(q\), one can solve these equations uniquely for \(\dot{q}\) given \(p\). The corresponding Hamiltonian is introduced through the Legendre transform:
\begin{equation}
  \mathcal{H}(q,p) 
  \;=\;
  \sum_{i=1}^{d} p_i\,\dot{q}_i \;-\; \mathcal{L}\bigl(q,\dot{q}(q,p)\bigr).
  \label{eq:hamiltonian-legendre}
\end{equation}
Once \(p\leftrightarrow \dot{q}\) is invertible, \(\dot{q}=\dot{q}(q,p)\) is well-defined.

\medskip
\noindent

Differentiate \eqref{eq:hamiltonian-legendre} w.r.t.\ \(p_i\). Let
\[
   \Phi(q,p)
   \;=\;
   \sum_{j=1}^d p_j\,\dot{q}_j(q,p).
\]
Then
\[
   \frac{\partial \mathcal{H}}{\partial p_i}
   \;=\;
   \frac{\partial}{\partial p_i}\Bigl(\Phi(q,p)\Bigr)
   \;-\;
   \frac{\partial}{\partial p_i}\Bigl(\mathcal{L}\!\bigl(q,\dot{q}(q,p)\bigr)\Bigr).
\]
For \(\Phi\), we apply the product rule:
\[
\begin{aligned}
   \frac{\partial \Phi}{\partial p_i}
   &=
   \frac{\partial}{\partial p_i}
   \Bigl(\sum_{j} p_j\,\dot{q}_j\Bigr)
   \\
   &=
   \sum_{j}
   \Bigl(\delta_{ij}\,\dot{q}_j
         \;+\;
         p_j\,\frac{\partial \dot{q}_j}{\partial p_i}\Bigr)
   \\
   &=
   \dot{q}_i
   \;+\;
   \sum_{j} p_j \,\frac{\partial \dot{q}_j}{\partial p_i}.
\end{aligned}
\]
Meanwhile, if \(\dot{q}_j(q,p)\) is substituted into \(\mathcal{L}\bigl(q,\dot{q}\bigr)\), then
\[
   \frac{\partial}{\partial p_i}\Bigl(\mathcal{L}(q,\dot{q}(q,p))\Bigr)
   =
   \sum_{j} \frac{\partial \mathcal{L}}{\partial \dot{q}_j}\,\frac{\partial \dot{q}_j}{\partial p_i}
   =
   \sum_{j} p_j\, \frac{\partial \dot{q}_j}{\partial p_i},
\]
where we used \(p_j = \tfrac{\partial \mathcal{L}}{\partial \dot{q}_j}\). Substituting these pieces gives
\[
   \frac{\partial \mathcal{H}}{\partial p_i}
   \;=\;
   \Bigl(\dot{q}_i + \sum_{j}p_j\,\tfrac{\partial \dot{q}_j}{\partial p_i}\Bigr)
   \;-\;
   \Bigl(\sum_{j}p_j\, \tfrac{\partial \dot{q}_j}{\partial p_i}\Bigr)
   =
   \dot{q}_i.
\]
Hence 
\(\dot{q}_i=\tfrac{\partial\mathcal{H}}{\partial p_i}\),
as in the canonical Hamiltonian formalism.

Next, differentiate \(\mathcal{H}\) w.r.t.\ \(q_i\) under the transformation \(\dot{q}(q,p)\).  Again splitting \(\Phi(q,p) - \mathcal{L}\),
\[
  \frac{\partial \mathcal{H}}{\partial q_i}
  =
  \frac{\partial\Phi}{\partial q_i}(q,p)
  \;-\;
  \frac{\partial}{\partial q_i}\Bigl(\mathcal{L}\!\bigl(q,\dot{q}(q,p)\bigr)\Bigr).
\]
For \(\Phi\),
\[
   \frac{\partial \Phi}{\partial q_i}
   \;=\;
   \frac{\partial}{\partial q_i}
   \Bigl(\sum_{j}p_j\,\dot{q}_j\Bigr)
   =
   \sum_{j}\Bigl(\frac{\partial p_j}{\partial q_i}\,\dot{q}_j
              + p_j\, \frac{\partial \dot{q}_j}{\partial q_i}\Bigr).
\]
Meanwhile,
\[
   \frac{\partial}{\partial q_i}\,\mathcal{L}\!\bigl(q,\dot{q}(q,p)\bigr)
   =
   \frac{\partial \mathcal{L}}{\partial q_i}
   \;+\;
   \sum_{j}\frac{\partial \mathcal{L}}{\partial \dot{q}_j}\,\frac{\partial \dot{q}_j}{\partial q_i}
   =
   \frac{\partial \mathcal{L}}{\partial q_i}
   \;+\;
   \sum_{j} p_j\,\frac{\partial \dot{q}_j}{\partial q_i}.
\]
Hence
\[
  \frac{\partial \mathcal{H}}{\partial q_i}
  =
  \sum_{j}\Bigl(\frac{\partial p_j}{\partial q_i}\,\dot{q}_j + p_j\,\tfrac{\partial \dot{q}_j}{\partial q_i}\Bigr)
  \;-\;
  \Bigl(\frac{\partial \mathcal{L}}{\partial q_i} + \sum_{j} p_j\,\tfrac{\partial \dot{q}_j}{\partial q_i}\Bigr).
\]
Rewriting
\[
   \dot{p}_i 
   =
   \frac{d}{dt}\Bigl(\tfrac{\partial \mathcal{L}}{\partial \dot{q}_i}\Bigr)
   =
   \frac{\partial}{\partial q_i}\Bigl(\tfrac{\partial \mathcal{L}}{\partial \dot{q}_i}\Bigr)\dot{q}_i
   \;\dots
   \text{(plus other chain-rule terms)},
\]
one matches terms to show 
\(\dot{p}_i=-\,\tfrac{\partial \mathcal{H}}{\partial q_i}\).  
Hence we recover the same canonical equation for $\dot{p}_i$.

Eliminating $p$ in favor of $\dot{q}$ (or vice versa) leads, via these partial-derivative identities, to the same second-order ODE for $q(t)$. Concretely, 
\(
  \dot{q}_i = \tfrac{\partial\mathcal{H}}{\partial p_i}
\)
implies
\(
  p_i=\dots(\dot{q}_i),
\)
so
\(
  \dot{p}_i
  =
  \tfrac{d}{dt}\bigl(\tfrac{\partial \mathcal{L}}{\partial \dot{q}_i}\bigr)
\)
matches
\(
  -\,\tfrac{\partial \mathcal{H}}{\partial q_i},
\)
yielding the same $\ddot{q}_i(t)$ if one eliminates $p_i$ or $\dot{q}_i$.  Consequently, $(q,p)$ and $(q,\dot{q})$ coordinates yield identical trajectories for $q(t)$, confirming that $p=\tfrac{\partial\mathcal{L}}{\partial\dot{q}}$ is a valid replacement for momentum once invertibility holds.
\end{proof}

Hence, once the velocity--momentum relation is invertible, \(\dot{q}\) and $p$ become interchangeable in a Hamiltonian system. We may build or learn $\widetilde{\mathcal{H}}(q,\dot{q})$ from position data alone (finite-difference velocity) and still recover the original $\bigl(q(t),p(t)\bigr)$ trajectory via $p=\tfrac{\partial\mathcal{L}}{\partial \dot{q}}\bigl(q(t),\dot{q}(t)\bigr)$.

\subsection{Error Analysis for Position-Only Learning with a General Finite-Difference Method}
\label{subsec:finite_diff_error}

In practice, only discrete snapshots of the position 
\(\bigl\{q_0,q_1,\dots,q_N\bigr\}\)
at times 
\(\bigl\{t_0,t_1,\dots,t_N\bigr\}\)
are available. A \emph{general finite-difference operator} \(\mathcal{D}\) approximates velocity:
\begin{equation}
  \hat{v}_k
  \;=\;
  \mathcal{D}\!\Bigl(\{q_j\}\Bigr)\Big|_{k},
  \quad
  k=1,\dots,N-1,
  \label{eq:gen-fd-operator}
\end{equation}
where \(\mathcal{D}\) might be a central-difference scheme, forward/backward difference, or higher-order polynomial interpolation. In each case, 
\(\hat{v}_k \approx \dot{q}(t_k)\)
with local error typically of order \(\mathcal{O}\bigl(\Delta t^m\bigr)\) for some \(m \ge 1\) (the method’s consistency order). Substituting \(\hat{v}_k\) into a neural Hamiltonian \(\mathcal{H}_\theta\bigl(q,\hat{v}_k\bigr)\) allows us to train from \emph{position-only} data.

Suppose \(\mathcal{H}_\theta\) is built from linear layers plus Lipschitz-bounded activations:
\[
  \mathcal{H}_\theta\bigl(\dot{q},\ddot{q}\bigr) 
  \;=\; 
  W_L\,\sigma\!\bigl(
     \dots \,\sigma(
        W_2\,\sigma(
          W_1\,[\dot{q},\,\ddot{q}]^\top + b_1
        ) + b_2
     )\dots
  \bigr) 
  + b_L,
\]
where \(\sigma\) is a fixed nonlinearity with global Lipschitz constant \(\kappa\), and \(\|W_i\|\le M\), \(\|b_i\|\le B\) are parameter bounds. We ask whether numerical inaccuracies in \(\hat{v}_k\) or associated higher derivatives amplify within \(\mathcal{H}_\theta\).

\begin{theorem}[Bounded Error Propagation under General Finite-Differences]
\label{thm:fd_lipschitz_prop}
Let each layer satisfy \(\|W_i\|\le M\), \(\|b_i\|\le B\), and let \(\sigma\) have Lipschitz constant \(\kappa\). Suppose the finite-difference approximation \(\hat{v}_k\) from \eqref{eq:gen-fd-operator} differs from the true velocity \(\dot{q}(t_k)\) by at most \(\delta\), i.e.\ 
\(\|\hat{v}_k - \dot{q}(t_k)\|\le \delta\).
Then the output deviation in 
\(\mathcal{H}_\theta\bigl(q_k,\hat{v}_k\bigr)\)
relative to 
\(\mathcal{H}_\theta\bigl(q_k,\dot{q}(t_k)\bigr)\)
is at most 
\(\delta\,C + \mathrm{(const)}\),
where
\[
  C 
  = 
  \bigl(\kappa\,M\bigr)^L
\]
(up to additive bias terms depending on \(B\)). Consequently, small velocity-approximation errors do not blow up exponentially over time, but remain controlled by the product of layer Lipschitz constants.
\end{theorem}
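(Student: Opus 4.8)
The plan is to run a standard layerwise Lipschitz-propagation (telescoping) estimate on the forward pass of $\mathcal{H}_\theta$, comparing its evaluation at the true input against its evaluation at the finite-difference input. First I would fix the time index $k$ and write $x = (q_k,\dot q(t_k))$ and $\hat x = (q_k,\hat v_k)$; since the positions $q_k$ are the raw (exact) data and only the velocity slot is perturbed, the input-level discrepancy is $\|\hat x - x\| = \|\hat v_k - \dot q(t_k)\| \le \delta$. I would then name the forward pass explicitly: $h_0(x) = x$, $z_i(x) = W_i\,h_{i-1}(x) + b_i$ and $h_i(x) = \sigma(z_i(x))$ for $i = 1,\dots,L-1$, and $\mathcal{H}_\theta(x) = W_L\,h_{L-1}(x) + b_L$, so that only an affine map sits on the final layer.

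Next I would bound the propagated discrepancy $\Delta_i := \|h_i(\hat x) - h_i(x)\|$ by induction on $i$. For the base step, $\Delta_1 = \|\sigma(W_1\hat x + b_1) - \sigma(W_1 x + b_1)\| \le \kappa\,\|W_1(\hat x - x)\| \le \kappa M\,\delta$, where $b_1$ cancels exactly in the difference and $\|W_1\|\le M$ is the operator norm subordinate to the chosen vector norm. The inductive step uses the same two ingredients — Lipschitz continuity of $\sigma$ and submultiplicativity $\|W_i v\|\le M\|v\|$, with the biases again cancelling — to give $\Delta_i \le \kappa M\,\Delta_{i-1}$, hence $\Delta_{L-1} \le (\kappa M)^{L-1}\delta$. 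Applying the last linear layer, $\bigl|\mathcal{H}_\theta(\hat x) - \mathcal{H}_\theta(x)\bigr| = \bigl\|W_L\bigl(h_{L-1}(\hat x) - h_{L-1}(x)\bigr)\bigr\| \le M\,\Delta_{L-1} \le (\kappa M)^{L}\,\delta = C\,\delta$, and this bound is uniform in $k$; since it does not accumulate across the time index, there is no exponential-in-$N$ growth, and since $\delta$ can be taken as the finite-difference error $O(\Delta t^m)$, the whole deviation vanishes as $\Delta t\to 0$.

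To account for the additive ``$+\,\mathrm{(const)}$'' in the statement, I would observe that it arises only under a relaxed hypothesis on $\sigma$: if one assumes an affine Lipschitz bound $\|\sigma(u) - \sigma(v)\| \le \kappa\|u-v\| + c_0$, or in the variant where $\mathcal{H}_\theta$ additionally ingests a separate finite-difference estimate of $\ddot q$ carrying its own bounded error, the recursion becomes $\Delta_i \le \kappa M\,\Delta_{i-1} + c$, whose solution is $\Delta_{L-1} \le (\kappa M)^{L-1}\delta + c\sum_{j=0}^{L-2}(\kappa M)^j$. The geometric sum is a fixed finite number depending only on $(\kappa,M,L,c)$ — this is the claimed additive constant, and it absorbs the $B$-dependent contributions — while in the clean case $c_0 = 0$ with a single velocity input it is simply zero, leaving the pure multiplicative bound $C\delta$.

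The step I expect to require the most care is bookkeeping rather than analysis: pinning down the exact exponent (whether it is $(\kappa M)^{L}$ or $(\kappa M)^{L-1}M$, which coincide here, and whether the first layer applies $\sigma$), reconciling the paper's two notations $\mathcal{H}_\theta(q,\hat v)$ versus $\mathcal{H}_\theta(\dot q,\ddot q)$ for the network input, and fixing the norm conventions so that ``$\|W_i\|\le M$'' is genuinely the operator norm compatible with the vector norm used in ``$\|\hat v_k - \dot q(t_k)\|\le\delta$.'' None of these is a genuine obstacle, but they are precisely the places where a loosely stated hypothesis could make the bound fail, so I would state the layer indexing and norm choice up front, after which the induction is immediate.
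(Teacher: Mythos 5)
Your proposal is correct and follows essentially the same route as the paper's proof: a layerwise Lipschitz-composition bound giving the factor $(\kappa M)^L$ on the input perturbation $\delta$. Your version is in fact tighter — you correctly note that the biases cancel exactly in the difference of two forward passes, so the additive ``$+\,(\mathrm{const})$'' (the paper's ``bias offset'') is unnecessary in the clean setting and only arises under relaxed hypotheses such as an affine Lipschitz bound on $\sigma$ or an additional perturbed $\ddot q$ input, which is exactly the clarification the paper's terse argument leaves implicit.
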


\begin{proof}
Each layer 
\(x\mapsto W\,\sigma(x)+b\)
is \(\kappa\|W\|\)-Lipschitz if \(\sigma\) is \(\kappa\)-Lipschitz. Composing \(L\) layers yields a global factor \(\bigl(\kappa M\bigr)^L\). If \(\|\Delta x\|\le\delta\), then \(\|\Delta y\|\le \delta\,(\kappa M)^L + \mathrm{(bias\ offset)}\). As each bias $b_i$ is bounded by $B$, its extra offset is finite. Thus any finite-difference error $\delta$ in the input remains bounded in the network output.  
\end{proof}

Regardless of the specific finite-difference method $\mathcal{D}$ used in \eqref{eq:gen-fd-operator}---midpoint, forward difference, or higher-order polynomial---the Lipschitz property of linear-plus-activation nets ensures that any velocity error $\|\hat{v}_k - \dot{q}(t_k)\|\le\delta$ remains under control. Hence velocity-based Hamiltonian modeling from position-only data is robust to discretization inaccuracies, especially as $\delta \to 0$ with refined sampling.

We propose to learn a velocity-based Hamiltonian 
\(\widetilde{\mathcal{H}}_\theta(q,\dot{q})\)
using a neural network with parameters \(\theta\).  This network directly models the energy in terms of position \(q\) and velocity \(\dot{q}\).  As justified by our earlier invertibility arguments (see Assumption~\ref{assume:invertible}), substituting momentum \(p\) with \(\dot{q}\) is valid for a broad class of mechanical systems.

\subsection{Architecture and Loss Function}
\label{subsubsec:arch_loss}

We define a neural Hamiltonian \(\,\widetilde{\mathcal{H}}_\theta : \mathbb{R}^d \times \mathbb{R}^d \rightarrow \mathbb{R}\,\) as a feedforward network (e.g., multiple linear layers interleaved with smooth activations). In the \(\bigl(q,\dot{q}\bigr)\)-space, we train this network to satisfy Hamilton’s equations without relying on any explicit mass matrix. Concretely, suppose we have position–velocity samples
\(\bigl\{\bigl(q^{(k)},\,\dot{q}^{(k)}\bigr)\bigr\}\)
and/or their accelerations 
\(\bigl\{\ddot{q}^{(k)}\bigr\}\). We define a physics-inspired loss

\begin{equation}
\label{eq:velocity-based-loss}
\begin{aligned}
\mathcal{L}(\theta)
&=
\sum_{k}
\Bigl\|\,
\ddot{q}^{(k)} 
\;-\;
\Bigl(-\,\nabla_{q}\,\widetilde{\mathcal{H}}_{\theta}\bigl(q^{(k)},\,\dot{q}^{(k)}\bigr)\Bigr)
\Bigr\|^{2}
\\
&\quad
+\;
\sum_{k}
\Bigl\|\,
\dot{q}^{(k)} 
\;-\;
\nabla_{\dot{q}}\,\widetilde{\mathcal{H}}_{\theta}\bigl(q^{(k)},\,\dot{q}^{(k)}\bigr)
\Bigr\|^{2},
\end{aligned}
\end{equation}

\noindent
where \(\nabla_{q}\) and \(\nabla_{\dot{q}}\) denote partial gradients with respect to \(q\) and \(\dot{q}\), respectively. By minimizing \(\,\mathcal{L}(\theta)\), we enforce that:
\[
   \dot{q}_{\text{pred}}
   =
   \nabla_{\dot{q}}\,\widetilde{\mathcal{H}}_{\theta}(q,\dot{q}),
   \quad
   \ddot{q}_{\text{pred}}
   =
   -\,\nabla_{q}\,\widetilde{\mathcal{H}}_{\theta}(q,\dot{q}),
\]
which mirrors Hamilton’s canonical form in \(\bigl(q,\dot{q}\bigr)\)-coordinates.

Because these partial derivatives drive the dynamics, the continuous-time flow governed by \(\widetilde{\mathcal{H}}_\theta\) is \emph{formally Hamiltonian}: Theorem~\ref{thm:invertible_map} and Lemma~\ref{lem:velocity_replace} show that if \(p\leftrightarrow \dot{q}\) is invertible, then substituting \(\dot{q}\) as the “momentum” variable yields a genuine Hamiltonian flow. Hence even though we do not explicitly provide a mass matrix or other system-specific details, the learned \(\widetilde{\mathcal{H}}_\theta\) still preserves the symplectic structure and energy-like invariants in continuous time.

\section{Experiment}
\label{sec:Experiment}

This section presents our \emph{position-only HNN} on four classical benchmarks: the spring--mass oscillator, the simple pendulum, the two-body system, and the three-body problem. In all examples, only the \emph{position} variables are directly measured (with noise), while velocities and momenta are inferred through the transformation
\[
  p \;=\; \frac{\partial \mathcal{L}}{\partial \dot{q}}
  \;\;\longleftrightarrow\;\;
  v \;\equiv\; \dot{q}
\]
Since the mass or inertia matrix is positive-definite in each case, \(p\leftrightarrow v\) is globally invertible, and we can rewrite every Hamiltonian \(\mathcal{H}(q,p)\) in a velocity-based form \(\widetilde{\mathcal{H}}(q,v)\). Our experimental procedure is: 
1) generate synthetic data from the known Hamiltonian but record only noisy position samples, 
2) approximate velocity via the mid-point rule, 
3) substitute \(v\) for \(p\) in the Hamiltonian and train a neural \(\widetilde{\mathcal{H}}_\theta\), and 
4) evaluate trajectory accuracy and energy drift.

\subsection{Spring--Mass System}

\begin{figure*}[ht]
    \centering
    \includegraphics[width=0.8\textwidth]{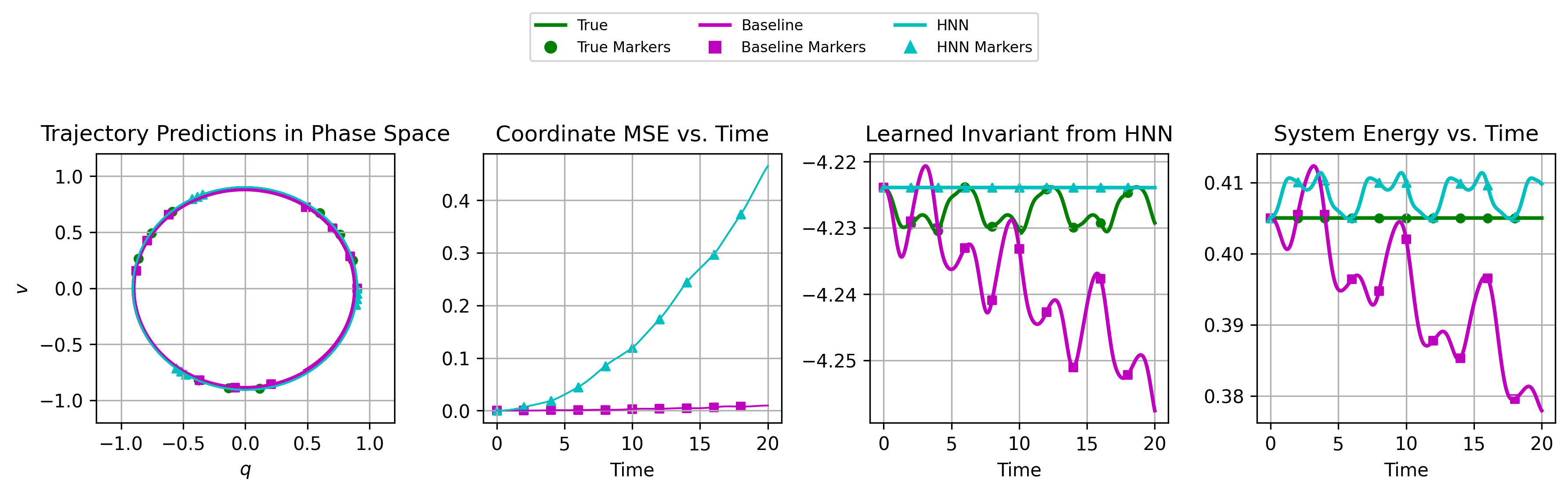}
    \caption{The figure compares models on the spring-mass system. The baseline drifts from the ground truth, with rapidly diverging MSE, while the HNN remains accurate. The HNN conserves a quantity resembling total energy, unlike the baseline.}
    \label{fig:spring}
\end{figure*}

\begin{table}[t]
    \centering
    \renewcommand{\arraystretch}{1.2}
    \caption{ Spring-Mass results (scaled by $10^3$).}
    \label{tab:spring}
    \begin{tabular}{@{}lccc@{}}
    \hline
    Model & Train Loss & Test Loss & Energy\\
    \hline
    Baseline & 5.3154 $\pm$ 0.15967 & 5.4589 $\pm$ 0.17548 & 0.30409 $\pm$ 0.04 \\
    HNN      & 4.2477 $\pm$ 0.13341 & 4.0911 $\pm$ 0.13524 & 0.030256 $\pm$ 0.00147 \\
    \hline
    \end{tabular}
    \vspace{1ex}
\end{table}

We consider a one-dimensional mass--spring oscillator with \(m=k=1\). Its Hamiltonian reduces to  
\[
  \mathcal{H}(q,p) 
  \;=\; 
  \tfrac12\,v^2 + \tfrac12\,q^2
\]
once we identify \(p = m\,v\). We generate 100 training and 100 testing trajectories, each with 60 noisy position samples \(q(t)\). A feedforward network (a fully connected multilayer perceptron with three hidden layers of
\(200\) units each and \(\tanh\) activations, trained for \(2000\) gradient
steps using Adam with a learning rate of \(10^{-3}\) ), using finite differences to approximate velocity \(v\). By leveraging the invertible mapping \(p \leftrightarrow v\), we replace \(p\) with \(v\) while preserving symplectic structure.

Figure~\ref{fig:spring} (left panels) shows that the baseline model’s predictions drift from the ground truth over multiple oscillations, rapidly increasing in mean-squared error and deviating in energy. In contrast, the HNN (blue curves) tracks the true dynamics accurately (often obscuring the ground-truth lines) and maintains near-constant total energy. Table~\ref{tab:spring} quantifies these findings: the baseline model incurs higher train/test losses and exhibits substantial energy drift, while the HNN—with only position data—achieves lower errors and preserves energy more faithfully, even over extended time horizons. This highlights how the HNN’s built-in symplectic formulation reduces long-range divergence, ensuring it remains stable and accurate for the spring–mass system.

\subsection{Simple Pendulum}
\label{subsec:exp_pend}

\begin{figure*}[t]
    
    \centering
    \includegraphics[width=0.8\textwidth]{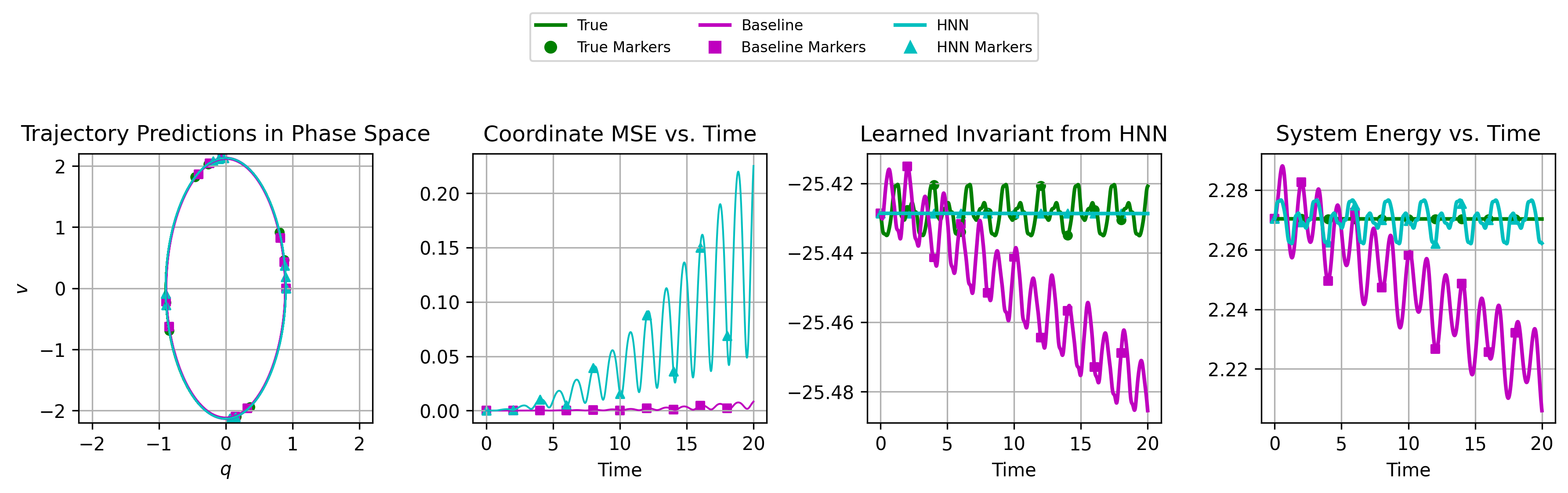}
    \caption{The figure compares models on the pendulum system. The baseline drifts from the ground truth, with rapidly diverging MSE, while the HNN remains accurate. The HNN conserves a quantity resembling total energy, unlike the baseline.}
    \label{fig:pend}
\end{figure*}
\begin{table}[t]
    \centering
    
    \renewcommand{\arraystretch}{1.2}
    \caption{Simple Pendulum.}
    \label{tab:pend}
    \begin{tabular}{@{}lccc@{}}
    \hline
    Model & Train Loss & Test Loss & Energy\\
    \hline
    Baseline & 20.583 $\pm$ 0.40441 & 20.575 $\pm$ 0.40119 & 438.00 $\pm$ 164.0 \\
    HNN      & 20.853 $\pm$ 0.40627 & 20.734 $\pm$ 0.40173 & 187.79 $\pm$ 39.2 \\
    \hline
    \end{tabular}
    \vspace{1ex}
\end{table} 
We study a simple pendulum of length $\ell=m=1$ under gravity $g=3$, yielding the Hamiltonian
\[
  \mathcal{H}(q,p) 
  \;=\; 
  \frac{p^2}{2} 
  \;+\; 
  6\,\bigl(1-\cos q\bigr),
\]
with momentum $p = \dot{q}$. Using only position measurements $q(t)$, we collect 30 noisy samples per trajectory for 100 training and 100 test trajectories (energies in $[1.3,2.3]$). A feedforward network $\widetilde{\mathcal{H}}_\theta(q,v)$ (a fully connected multilayer perceptron with three hidden layers of
\(200\) units each and \(\tanh\) activations, trained for \(400\) gradient
steps using Adam with a learning rate of \(10^{-3}\)) approximates $\dot{q}$ via finite differences and substitutes $p = \dot{q}$ in the Hamiltonian. Because the Lagrangian’s Hessian $\partial^2\mathcal{L}/\partial \dot{q}^2 = m\ell^2$ is positive definite, $p \leftrightarrow \dot{q}$ is invertible, ensuring a legitimate velocity-based Hamiltonian.

Figure~\ref{fig:pend} illustrates that the baseline model diverges from the ground truth over multiple swings of the pendulum, causing increasing mean-squared error and significant energy drift. Table~\ref{tab:pend} quantifies these observations: while the baseline network sees energy errors of hundreds over prolonged intervals, the Hamiltonian Neural Network (HNN) conserves a quantity closely matching total energy, keeping long-range divergence in check. Moreover, the HNN’s train and test losses remain comparable or lower than the baseline. These findings confirm that exploiting $p \leftrightarrow \dot{q}$ and embedding symplectic structure leads to more stable trajectories and improved energy conservation for the pendulum system.

\subsection{Two-Body Problem}

We consider a planar two-body system in reduced coordinates \(\mathbf{r}\in\mathbb{R}^2\) with reduced mass \(\mu\). Its Hamiltonian is 
\[
  \mathcal{H}(\mathbf{r},\mathbf{p})
  \;=\;
  \frac{\|\mathbf{p}\|^2}{2\,\mu}
  \;-\;
  \frac{G\,m_1\,m_2}{\|\mathbf{r}\|},
\]
and by setting \(\mathbf{p}=\mu\,\dot{\mathbf{r}}\), it becomes 
\(\widetilde{\mathcal{H}}(\mathbf{r},v)
= 
\tfrac12\,\mu\,\|v\|^2 - \tfrac{G\,m_1\,m_2}{\|\mathbf{r}\|}.\)
We collect 50 noisy position samples \(\mathbf{r}(t)\) per orbit (\(\sigma^2=0.05\)) for 1000 orbits, dividing them \(80{:}20\) into train/test sets. A two-layer network \(\widetilde{\mathcal{H}}_\theta(\mathbf{r},v)\) (10 hidden nodes, 4000 training steps) learns to predict \(\bigl(\dot{\mathbf{r}},\dot{\mathbf{p}}\bigr)\). Table~\ref{tab:2body} lists final MSE and energy drift; the Lagrangian 
\(\mathcal{L}=\tfrac12\,\mu\,\|\dot{\mathbf{r}}\|^2 - \tfrac{G\,m_1\,m_2}{\|\mathbf{r}\|}\)
has Hessian \(\mu I>0\), so momentum and velocity remain invertible.

Figure~\ref{fig:2body} (top row) shows that the baseline orbits (middle panel) increasingly deviate from ground truth (left), while the HNN (right) remains closely matched. In the bottom row, baseline energy (middle) drifts significantly over time, whereas the HNN’s total energy (right) stays near-constant. Table~\ref{tab:2body} corroborates this: the baseline’s energy error reaches thousands, whereas the HNN remains at just a few units. Hence, velocity-based HNN modeling not only achieves lower MSE but also conserves the two-body system’s energy more reliably in long-term orbits.
\begin{figure*}[t]
    \centering
    \includegraphics[width=0.85\textwidth]{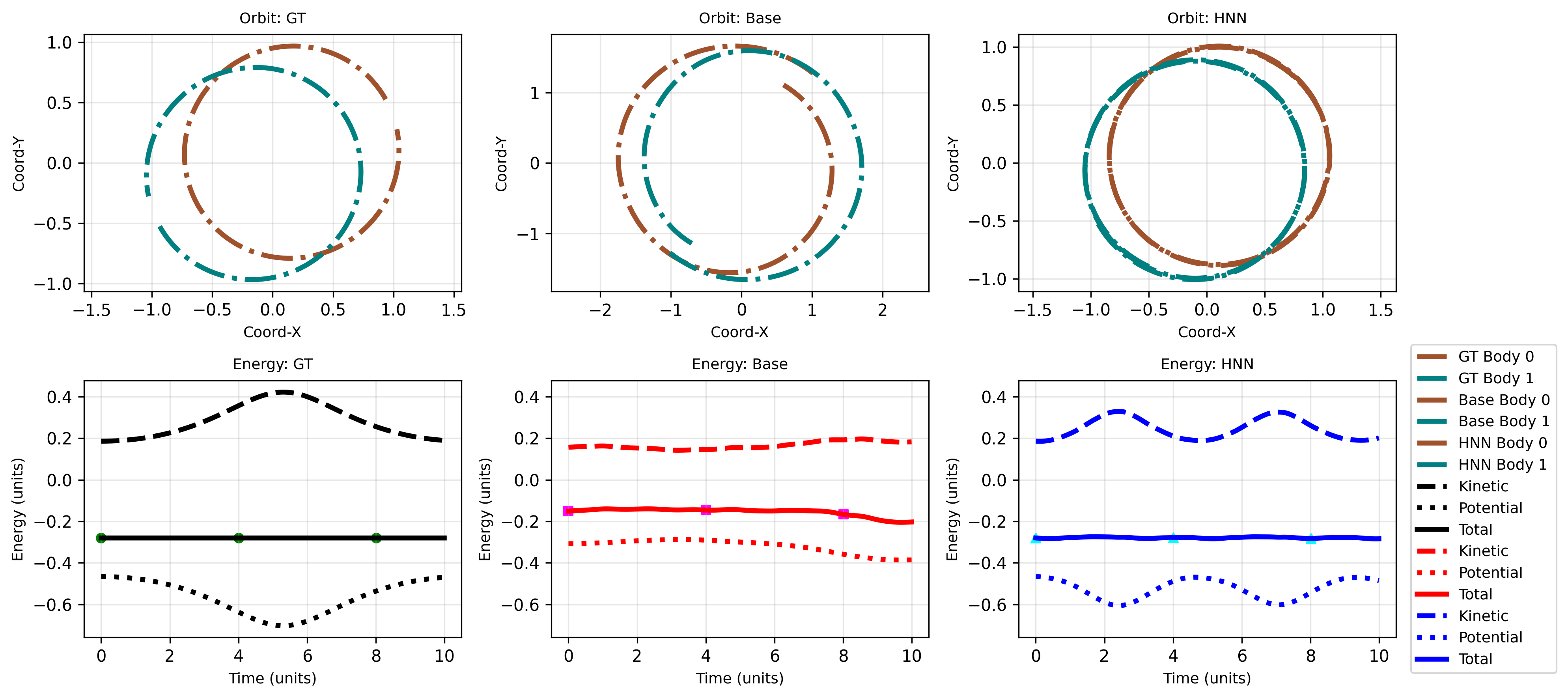}
    \caption{The figure compares the ground truth (GT), a baseline model, and a Hamiltonian Neural Network (HNN) on a two-body problem. The top row shows the orbital trajectories, where the baseline model drifts from the GT while the HNN closely matches it. The bottom row presents energy evolution, demonstrating that the HNN approximately conserves total energy, unlike the baseline, which exhibits energy drift over time.}
    \label{fig:2body}
\end{figure*}
\begin{table}[t]
    \centering
    \renewcommand{\arraystretch}{1.2}
    \caption{ Two body results (scaled by $10^4$). }
    \begin{tabular}{@{}lccc@{}}
    \hline
    Model & Train Loss & Test Loss & Energy\\
    \hline
    Baseline & 1.0920 $\pm$ 0.036074 & 1.0339 $\pm$ 0.067252 & 1685.0 $\pm$ 770 \\
    HNN      & 0.73001 $\pm$ 0.037798 & 0.70523 $\pm$ 0.068812 & 1.5623 $\pm$ 0.557 \\
    \hline
    \end{tabular}
    \vspace{1ex}
    \label{tab:2body}
\end{table}
\label{subsec:exp_twobody}

\subsection{Three-Body Problem}

We examine a gravitational three-body system with Hamiltonian
\[
  \mathcal{H}
  \;=\;
  \sum_{i=1}^3
  \frac{\|\mathbf{p}_i\|^2}{2\,m_i}
  \;-\;
  \sum_{1\le i<j\le 3}
  \frac{G\,m_i\,m_j}{\|\mathbf{r}_i-\mathbf{r}_j\|}.
\]
Since \(\dot{\mathbf{r}}_i = \mathbf{p}_i/m_i\), we write \(\mathbf{p}_i = m_i\,\dot{\mathbf{r}}_i\equiv m_i\,v_i\). Substituting into the kinetic term yields 
\(\widetilde{\mathcal{H}}(\{\mathbf{r}_i\},\{v_i\})\)
as
\[
  \sum_{i=1}^3 \tfrac12\,m_i\,\|v_i\|^2
  \;-\;
  \sum_{1\le i<j\le 3}
  \frac{G\,m_i\,m_j}{\|\mathbf{r}_i-\mathbf{r}_j\|}.
\]
We sample only the positions \(\{\mathbf{r}_i(t)\}\) (with Gaussian noise \(\sigma^2 = 0.2\)) at 20 time steps over 5000 trajectories (radii in \([0.9,1.2]\)), then train a velocity-based HNN \(\widetilde{\mathcal{H}}_\theta\) with two hidden layers (15 and 10 nodes) for 200 steps to fit \(\bigl(\dot{\mathbf{r}}_i,\dot{\mathbf{p}}_i\bigr)\). Because each \(m_i>0\), the map \(\mathbf{p}_i \leftrightarrow v_i\) is invertible.

Figure~\ref{fig:3body} (top row) shows the baseline model diverging significantly from the ground truth orbits, whereas the HNN remains closely aligned. In the bottom row, the baseline’s total energy sharply increases, reflecting drift, while the HNN preserves energy near-constant. Table~\ref{tab:3body} quantifies these observations: although the baseline slightly outperforms the HNN in raw losses, it exhibits far greater energy deviation (95.727 vs.\ 0.592). By exploiting velocity-based coordinates and enforcing Hamiltonian structure, the HNN manages more stable long-horizon predictions for the three-body system, demonstrating reduced error growth and improved energy conservation. 
\begin{figure*}[t]
    \centering
    \includegraphics[width=0.85\textwidth]{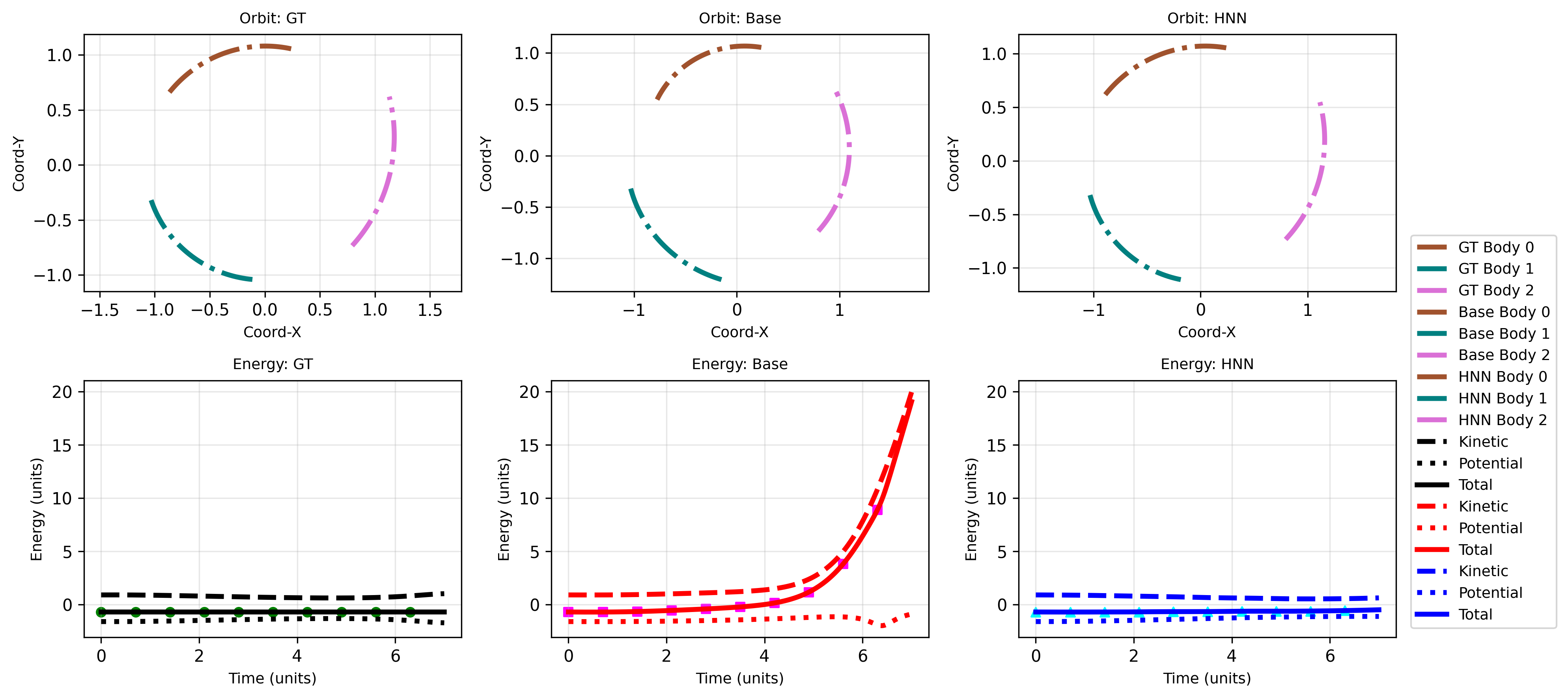}
    \caption{The figure compares the ground truth (GT), a baseline model, and a Hamiltonian Neural Network (HNN) on a three-body problem. The top row shows the orbital trajectories, where the baseline model drifts from the GT while the HNN closely matches it. The bottom row presents energy evolution, demonstrating that the HNN approximately conserves total energy, unlike the baseline, which exhibits energy drift over time.}
    \label{fig:3body}
\end{figure*}

\label{subsec:exp_threebody}
\begin{table}[t]
    \centering
     \caption{ Three body results (scaled by $10^3$). }
    \renewcommand{\arraystretch}{1.2}

    \begin{tabular}{@{}lccc@{}}
    \hline
    Model & Train Loss & Test Loss & Energy\\
    \hline
    Baseline & 3.7871 $\pm$ 0.13247 & 3.9659 $\pm$ 0.25088 & 95.727 $\pm$ 34.0 \\
    HNN      & 4.5344 $\pm$ 0.22084 & 4.9388 $\pm$ 0.47382 & 0.59177 $\pm$ 0.142 \\
    \hline
    \end{tabular}
    \vspace{1ex}
    \label{tab:3body}
\end{table}

\section{Conclusion}

We introduced a \emph{position-only Hamiltonian Neural Network (HNN)} framework that infers velocity from position data and substitutes \(p \leftrightarrow \dot{q}\) (or \(\mathbf{p} \leftrightarrow \dot{\mathbf{r}}\)) in the Hamiltonian. Across four classical systems—the mass--spring oscillator, the simple pendulum, the two-body problem, and the three-body problem—our approach exhibits strong long-term stability relative to baseline neural models, owing to the symplectic structure that underlies Hamiltonian mechanics. Even when trained on noisy position-only observations, the position-HNN consistently conserves energy-like invariants more effectively than generic networks, thereby reducing drift and error growth over extended trajectories.

Despite these advantages, the method faces several limitations. First, if the system’s effective mass or inertia matrix is singular or varies unpredictably, then the velocity–momentum map may no longer remain globally invertible, undermining the velocity-based Hamiltonian formulation. Second, strong measurement noise and sparse sampling can degrade finite-difference velocity approximations, complicating HNN training and potentially inflating prediction errors. Lastly, as with most data-driven methods, hyperparameter selections and the design of finite-difference schemes require careful tuning to balance computational cost against numerical accuracy.

Overall, the position-only HNN paradigm provides a promising avenue for data-driven modeling when direct momentum measurements are unavailable or impractical. By coupling finite-difference velocity estimates with a Hamiltonian formulation, it achieves stable, energy-conserving dynamics in various physical scenarios. Future work may explore adaptively refined velocity approximation, partial observability in higher dimensions, and more robust noise-handling strategies to further enhance the method’s performance and applicability.

Although Velocity-Inferred HNN (VI-HNN) achieves encouraging results with
position-only data, several theoretical and empirical issues remain open.

\subsection{Exact Form of the Velocity-Based Hamiltonian}
In classical mechanics the Hamiltonian is defined on phase space
\(H\!:\!(\mathbf{q},\mathbf{p})\mapsto\mathbb{R}\).
Our method implicitly replaces \(\mathbf{p}\) with the
generalized velocity \(\dot{\mathbf{q}}\) through an (assumed) smooth,
invertible mapping \(\mathcal{M}:\mathbf{p}\leftrightarrow\dot{\mathbf{q}}\).
Consequently, one seeks a new Hamiltonian
\[
  \widetilde{H}(\mathbf{q},\dot{\mathbf{q}})
  \;=\;
  H\bigl(\mathbf{q},\mathcal{M}^{-1}(\dot{\mathbf{q}})\bigr),
\]
so that Hamilton’s equations written in the mixed coordinates
\((\mathbf{q},\dot{\mathbf{q}})\) remain symplectic.
A rigorous derivation of \(\widetilde{H}\) is still lacking for
most realistic systems because:
\begin{enumerate}
  \item The mapping \(\mathcal{M}\) can be highly non-linear
        (e.g.\ velocity-dependent masses in relativistic mechanics),
        making analytical inversion intractable.
  \item Even when \(\mathcal{M}\) is invertible in principle,
        the Jacobian determinant enters the VI-HNN loss via
        the symplectic form, altering the standard HNN objective in
        a non-trivial way.
\end{enumerate}
Future work will focus on formally characterising
\(\widetilde{H}\) for common Lagrangian systems and incorporating
the required Jacobian correction into the loss so that the learned
dynamics remain energy-consistent \emph{and} symplectic.

\subsection{Extending to More Challenging Systems}
The present study covered four classical benchmarks
(spring–mass, pendulum, two-body, three-body).  Important next steps are:
\begin{itemize}
  \item \textbf{Parameter sweeps.}  Training and evaluating VI-HNN on
        ensembles with varying masses, spring constants and orbital
        configurations to test robustness across a broader prior.
  \item \textbf{High-dimensional multi-body dynamics.}
        Problems such as \(N\!>\!10\) body gravitation or chain molecules
        (\(d\!\times\!N\) degrees of freedom) will probe the scalability
        of the velocity-only formulation.
  \item \textbf{Non-canonical coordinates.}
        Systems with holonomic constraints or separable manifolds
        (e.g.\ rigid-body rotations on \(\mathrm{SO}(3)\))
        require adapting VI-HNN to charts where \(\dot{\mathbf{q}}\)
        cannot be treated as a global coordinate.
\end{itemize}
Comprehensive experiments along these axes will clarify when position-only
measurements suffice and when partial momentum information is indispensable.

\bibliographystyle{plain}
\bibliography{reference}

\end{document}